\documentclass[sn-mathphys-num]{sn-jnl}
\usepackage{graphicx}
\usepackage{multirow}
\usepackage{amsmath,amssymb,amsfonts}
\usepackage{amsthm}
\usepackage{mathrsfs}
\usepackage[title]{appendix}
\usepackage{xcolor}
\usepackage{textcomp}
\usepackage{manyfoot}
\usepackage{booktabs}
\usepackage{algorithm}
\usepackage{algorithmicx}
\usepackage{quiver}
\usepackage{algpseudocode}
\usepackage{listings}
\usepackage{physics}

\def\nn{\nonumber}

\def\f{\frac}
\def\l{\left}
\def\r{\right}

\def\R{\mathbb{R}}

\def\Z{\mathbb{Z}}

\def\p{\phi}
\def\a{\alpha}

\def\t{\theta}

\def\ie{\textit{i.e.}}
\theoremstyle{thmstyleone}
\newtheorem{theorem}{Theorem}
%

\theoremstyle{thmstyletwo}%

\theoremstyle{thmstylethree}%
\newtheorem{definition}{Definition}%

\raggedbottom

\begin{document}

\title[Article Title]{Group theoretic quantization of punctured plane}
\author*[1]{\fnm{Manvendra} \sur{Somvanshi}}\email{ms20126@iisermohali.ac.in}
\author[2]{\fnm{D. Jaffino} \sur{Stargen}}\email{jaffinostargend@gmail.com}
%

\affil*[1]{\orgdiv{Department of Mathematical Sciences},
\orgname{Indian Institute of Science Education \& Research Mohali}, \orgaddress{\street{Sector 81},
\city{SAS Nagar}, \postcode{140306}, \state{Punjab}, \country{India}}}

%

\affil[2]{Department of Physics, Global Academy of Technology,
Aditya Layout, RR Nagar, Bengaluru, Karnataka 560098, India}

\affil[2]{\orgdiv{Department of Mechanical Engineering}, \orgname{Massachusetts Institute of Technology},
\orgaddress{\street{77 Massachusetts Avenue}, \city{Cambridge}, \postcode{02139}, \state{MA}, \country{USA}}}

\abstract{
We quantize punctured plane, $X=\mathbb{R}^2-\{0\}$, employing Isham's group theoretic quantization
procedure. After sketching out a brief review of group theoretic quantization procedure, we
apply the quantization scheme to the phase space, $M=X \times \R^2$, corresponding to the punctured
plane, $X$. Particularly, we find the
canonical Lie group, $\mathscr{G}$, corresponding to the phase space, $M=X \times \R^2$, to
be $\mathscr{G} = \R^2 \rtimes (SO(2)\times \R^+)$. We establish an algebra homomorphism between
the Lie algebra corresponding to the canonical group, $\mathscr{G} = \R^2 \rtimes (SO(2)\times \R^+)$,
and the smooth functions, $f\in C^{\infty}(M)$, in the phase space, $M=X \times \R^2$. Making
use of this homomorphism and unitary representation of the canonical group,
$\mathscr{G} = \R^2 \rtimes (SO(2)\times \R^+)$, we deduce a quantization map that maps a subspace of
classical observables, $f\in C^{\infty}(M)$, to self-adjoint operators on the Hilbert
space, $\mathscr{H}$, which is the space of all square integrable functions
on $X=\mathbb{R}^2-\{0\}$ with respect to the measure $\dd \mu = \dd \phi\dd\rho/(2\pi\rho)$.
}
\maketitle
\section{Introduction}
Canonical quantization in ${\mathbb R}^n$ is a crude process of associating every classical variable, corresponding to a
classical dynamical system, with self-adjoint operators on a Hilbert space \cite{dirac1981principles, sakurai1967advanced}. The vectors in the Hilbert space are called
the quantum states corresponding to the quantized version of the given classical dynamical system. The canonical quantization
works well in ${\mathbb R}^n$, primarily, due to the global vector space structure of
${\mathbb R}^n$ and
the associated phase space \cite{von2018mathematical,wald1994quantum}. Also, the canonical quantization largely
depends on the global coordinate chart in the phase space that one chooses to work with. Therefore, the canonical quantization may not be suitable for quantizing classical dynamical systems with generic phase spaces, particularly when they possess nontrivial topologies. The simplest examples where the global coordinate chart does not exist are the phase spaces corresponding to a particle that is
constrained to move on a circle, ${\mathbb S}^{1}$, sphere, ${\mathbb S}^{2}$, and cylinder, ${\mathbb S}^{1}\times \R$; and hence quantizing this simple dynamical system turns out to be nontrivial \cite{bojowald2000group,e2021particle,kastrup2011quantization}.

Unlike canonical quantization, group theoretic quantization scheme, proposed by Isham, turns out to be useful
when the phase space of a given dynamical system possess nontrivial topology, and consequently lacks a global
coordinate chart \cite{isham}. Particularly, suppose the given classical dynamical system has a phase space that is a smooth symplectic manifold, $(M,\omega)$. The smooth functions, $f:M\to \R$, in the phase space, $(M,\omega)$, represent the classical observables of the dynamical system, and the time evolution of these classical observables is given by
\begin{equation}
 \dot{f} = \{f,H\},
\end{equation}
where $H$ denotes the Hamiltonian, a smooth function on $M$, corresponding to the classical dynamical system, and $\{\cdot, \cdot\}$ signifies the Poisson bracket on the phase space, $(M,\omega)$. In this setting, group theoretic quantization of the phase space, $(M,\omega)$, refers to a map $\hat{\cdot}:C\subset C^\infty(M) \to \mathscr{O}(\mathscr{H})$, where $\mathscr{O}(\mathscr{H})$ is the algebra of self-adjoint linear operators on the Hilbert space $\mathscr{H}$, and $C$ is a subalgebra of $C^\infty(M)$, such that the poisson bracket algebra is related to the commutator algebra in the following way
\begin{equation}\label{quant}
 \{f,g\} \mapsto -\f{i}{\hbar} [\hat{f}, \hat{g}].
\end{equation}
Particularly, the quantization map, $\widehat{\cdot}$, is primarily based on constructing a canonical group, $\mathscr{G}$ -- a Lie group corresponding to the phase space, $(M,\omega)$, such that the group action is symplectomorphic. The Lie algebra corresponding to the canonical group, $\mathscr{G}$, bridges the Poisson bracket algebra of classical variables, and the algebra of unitary operators on a Hilbert space, $\mathscr{H}$. Specifically, the action of elements of canonical group, $\mathscr{G}$, on the elements of the phase space, $(M,\omega)$, induces a map between the Lie algebra of the canonical group, $\mathscr{G}$, and $C^\infty(M)$; and there are Lie group representations, $\mathcal{U}:G\to U(\mathscr{H})$, that gives a quantization map from a subalgebra of $C^\infty(M)$ and operators in $U(\mathscr{H})$. Note that the unitary representations of the canonical group, $\mathscr{G}$, may not be unique, and therefore the group theoretic quantization scheme assigns each unitary representation of the canonical group, $\mathscr{G}$, a particular quantization map. Another important remark about group theoretic quantization is that the group product on the canonical group, $\mathscr{G}$, determines the "Weyl-like" relations on the unitary operators on a Hilbert space, $\mathscr{H}$. Using these weyl-like relations we determine the commutation relations between the operators on the Hilbert space, $\mathscr{H}$, which is in contrast to the usual canonical quantization scheme employed in quantizing ${\mathbb R}^{n}$ \cite{isham}.

In this paper, we quantize the punctured plane, \ie, ${\mathbb R}^2-\{0\}$, employing Isham's group
theoretic quantization
scheme, where the phase space is $S^1\times \R^3$. By looking at the symmetries of the phase space corresponding to the punctured plane, \ie, rotation, scaling of position coordinates, and translation in canonically conjugate momentum coordinates, and the requirement that group action on the phase space, $S^1\times \R^3$, should be symplectomorphic, we identify the canonical group, ${\mathscr G}$,to be $\R^2\rtimes(\text{SO}(2,\R)\times \R^+)$. We discuss further details in the forthcoming sections. Particularly, in Sec.(\ref{Sec:II}), we briefly review the group theoretic quantization scheme, as discussed in \cite{isham}. In Sec.(\ref{Sec:III}), as an illustrative example, we quantize $\R^2$ employing Isham's group theoretic quantization scheme. In Sec.(\ref{Sec:IV}), we discuss in detail the quantization of punctured plane, $\R^2-\{0\}$, using Isham's group theoretic quantization scheme. Finally, we conclude and summarize our results in Sec.(\ref{Sec:V}).
\section{Isham's group theoretic quantization: A brief review}\label{Sec:II}
\begin{definition}
 A smooth manifold, $M$, is said to be the {\bf homogeneous space} of a Lie group, $G$, if the Lie group, $G$, admits a
 transitive action on $M$, i.e. $\forall x,y\in M$ there is a $g\in G$ such that $y=g\cdot x$.
\end{definition}

Suppose the given phase space, $(M,\omega)$, is a smooth symplectic manifold, then quantizing it employing the group theoretic
quantization procedure involves the following steps below:
\vskip 10pt\noindent
$\bullet$ \textbf{Step-I}: The first step is to construct a Lie group, $\mathscr{G}$, called the canonical group of the phase
space, $(M,\omega)$, such that the smooth manifold, $M$, is a homogeneous space of $\mathscr{G}$, and the action of each
element of the group, $\mathscr{G}$, on the elements of $M$ is symplectomorphic.
\vskip 10pt\noindent
$\bullet$ \textbf{Step-II}: In this step, one constructs a map, $\gamma:\mathfrak{g} \to \text{VF}(M)$, where $\mathfrak{g}$
denotes the Lie algebra corresponding to the canonical group, $\mathscr{G}$, and $\text{VF}(M)$ denotes the set of all vector
fields in the smooth manifold, $M$. For this purpose, consider a map $\R\to \mathscr{G}$, that is given explicitly as
$t \mapsto \exp(tA)$, where $A\in \mathfrak{g}$, and $\exp(tA) \in \mathscr{G}$. Therefore, the image of the map,
$t \mapsto \exp(tA)$, is a one-parameter subgroup of the canonical group, $\mathscr{G}$, and a vector field, $\gamma^A \in \text{VF}(M)$,
is defined as
\begin{equation}
 \gamma^A_x(f) = \dv{f(\exp(-tA)x)}{t}\bigg|_{t=0},\ \text{where}\ x\in M,\ f\in C^\infty( M).
\end{equation}
These vector fields, $\gamma^A$, are called fundamental vector fields of the Lie algebra,
$\mathfrak{g}$, corresponding to the canonical group, $\mathscr{G}$.
Also, since the map, $\gamma:\mathfrak{g} \to \text{VF}(M)$, is a homomorphism (see Appendix. \ref{app:C}),
we obtain
\begin{equation}
 [\gamma^A, \gamma^B] = \gamma^{[A,B]}.
\end{equation}
\vskip 10pt\noindent
$\bullet$ \textbf{Step-III}:
\begin{definition}
 A vector field, $\xi$, is said to be a Hamiltonian vector field, if there exists a function,
 $f\in C^\infty(M)$,
 in the phase space, $(M,\omega)$, such that
 \begin{equation}
  \dd f(X) = \omega(\xi, X),\qq{where} X\in \text{VF(M)}.
 \end{equation}
\end{definition}
\begin{figure}\label{fig:1}
\[\begin{tikzcd}
	{\mathfrak{g}} && {C^\infty(M)} \\
	& {\text{HamVF}(M)}
	\arrow["P", dashed, from=1-1, to=1-3]
	\arrow["\gamma"', from=1-1, to=2-2]
	\arrow["\xi", from=1-3, to=2-2]
\end{tikzcd}\]
\caption{The fundamental vector fields, $\gamma$, are necessarily to be Hamiltonian vector fields, so that an isomorphism from the Lie algebra, $\mathfrak{g}$, into the Poisson algebra, $C^{\infty}(M)$,
can be established.}
\end{figure}
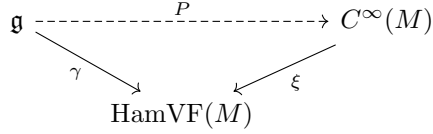
The goal in this step is to construct a homomorphism from the Lie algebra, $\mathfrak{g}$, corresponding to the canonical
group, $\mathscr{G}$, into the classical observables, $f \in C^\infty(M)$. From Step-II above, we know that every element,
$A$, in the Lie algebra, $\mathfrak{g}$, corresponds to a fundamental vector field, $\gamma^A \in \text{VF(M)}$. Moreover,
every function, $f \in C^\infty( M)$, in the phase space, $(M,\omega)$, corresponds to a Hamiltonian vector field, $\xi_{f}$ \cite{loringtu}. Therefore, for the purpose of constructing an isomorphism from the Lie algebra, $\mathfrak{g}$,
into the classical observables, $C^\infty(M)$, we require the fundamental vector fields, $\gamma^{A}$, to be the
Hamiltonian vector fields. Consequently, this restricts the canonical group, $\mathscr{G}$, to be such that all the
fundamental vector fields, $\gamma^A$, corresponding to the elements, $A$, in the Lie algebra, $\mathfrak{g}$ to be
Hamiltonian vector fields. Supposing a canonical group, $\mathscr{G}$, satisfies this condition,
then define a map $P:\mathfrak{g} \to C^\infty(M)$, which is given by $A \mapsto P_A$, where the classical observable,
$P_A$, corresponds to the field $-\gamma^{A}$ (see Figure. \ref{fig:1}).
\vskip 10pt\noindent
$\bullet$ \textbf{Step-IV}:
\begin{definition}
The action of a group, $G$, on a manifold, $M$, is said to be \textit{effective}, if $g_{1},g_{2}\in G$
and $x\in M$, then
 \begin{equation}
  g_{1}x = g_{2}x \implies g_{1} = g_{2}.
 \end{equation}
\end{definition}
Since a Hamiltonian vector field could correspond to more than one element in the Lie algebra, $\mathfrak{g}$, the map,
$\gamma$, will not be one-to-one. Consequently, the map, $P$, will not be one-to-one. This stems from the fact that the
action of two distinct one-parameter sub-groups on the phase space, $(M,\omega)$, could lead to the same elements. This
can be avoided if we restrict the action of the phase space, $(M,\omega)$, to be effective.

More formally, if $A\in \text{Ker}(\gamma)$ then
\begin{equation}
        \dv{f(\exp(-tA)x)}{t}\bigg|_{t=0} = 0, \quad \forall x\in M.
\end{equation}
In other words, in some small enough neighborhood of $t = 0$, one obtains
\begin{equation}\label{temp}
 \exp(-tA)x = x.
\end{equation}
Therefore, imposing the condition that the action of the canonical group, $\mathscr{G}$, on the phase space,
$(M,\omega)$, is effective, forces the map, $\gamma$, to be injective. In other words, supposing that the
action is effective then the only solution to Eq.\eqref{temp} is $A=0$, which means $\text{Ker}(\gamma) = \{0\}$. 
Note that effectiveness can be assumed without loss of generality, since one can always make a group
action effective by taking the quotient of the canonical group, $\mathscr{G}$, with the normal subgroup,
$H=\{g\in \mathscr{G}\ :\ gx =x,\ x\in M\}$. In general, it is enough just to impose the group action to
be effective outside a discrete subgroup of the canonical group, $\mathscr{G}$, which allows one to use
group actions of the covering spaces corresponding to the canonical group, $\mathscr{G}$, on the phase space,
$M$. 
\vskip 10pt\noindent
$\bullet$ \textbf{Step-V:}
If $\xi_{f}$ be the Hamiltonian vector field corresponding to a classical observable, $f \in C^{\infty}(M)$, then
\begin{equation}
 \xi_{P_{[A,B]}} = -\gamma^{[A, B]} = -[\gamma^A, \gamma^B] = -[\xi_A, \xi_B] = \xi_{\{P_A, P_B\}}.
\end{equation}
If $\xi_{f}=\xi_{g}$, then $f-g={\rm constant}$, then one can write
\begin{equation}
 P_{[A,B]} - \{P_A, P_B\} = z(A,B), \quad z(A,B)\in \R.
\end{equation}
Therefore, the map, $P$, is not a homomorphism in general. If $z(A,B)=0$, then the linear map, $P$, is an
algebra isomorphism into the Poisson algebra, $C^\infty(M,\R)$, and is called a momentum map. In the cases
where the factor, $z(A,B)$, cannot be made to vanish by adding a constant to $P$, the Lie algebra is extended
to $\mathfrak{h} = \mathfrak{g} \oplus \R$, with Lie bracket
\begin{equation}
 [(A,r), (B,s)] = ([A,B], z(A,B)).
\end{equation}
Consider the map, $P':\mathfrak{h}\to C^\infty(M,\R)$, that is given by $P'_{(A,r)} = P_A + r$, where $r\in \R$,
then one can show that
\begin{equation*}
 \{P'_{(A,r)}, P'_{(B,s)}\} = \{P_A, P_B\} = P_{[A,B]} + z(A,B) = P'_{([A,B], z(A,B))} = P'_{[(A,r), (B,s)]},
\end{equation*}
which means the map, $P'_{(A,r)}$, is a homomorphism. Moreover, the canonical group, $\mathscr{G}$, is simply replaced
by the unique simply connected Lie group, $H$, which has the Lie algebra $\mathfrak{h}$.
\vskip 10pt\noindent
$\bullet$ \textbf{Step-VI:} Finally, let $\mathscr{H}$ be a separable Hilbert space, on which let
$\text{Unitary}(\mathscr{H})$ be the group of unitary operators, and let the map
$U: \mathscr{G} \to \text{Unitary}(\mathscr{H})$ be a weakly continous, irreducible unitary representation of the
canonical group, $\mathscr{G}$, such that
\begin{equation}
 U(\exp(A)) = e^{-i\hat{K}_A},
\end{equation}
where $\hat{K}_A$ is a self adjoint operator on the Hilbert space, $\mathscr{H}$. Furthermore, since
\begin{align}
 U((\exp(tA)\exp(sB))(\exp(-tA)\exp(-sB))) = \l(e^{-it\hat{K}_A}e^{-is\hat{K}_B}\r)\l(e^{it\hat{K}_A}e^{is\hat{K}_B}\r),
\end{align}
and employing the identity
\begin{equation}\label{bkh}
        (e^{tA}e^{sB})(e^{-tA}e^{-sB}) = e^{ts[A,B] + \text{higher order terms}},
\end{equation}
we obtain
\begin{align}
 U(\exp(ts[A,B] + \text{higher order})) &= e^{-i(ts[\hat{K}_A, \hat{K}_B] + \text{higher order})}, 
\end{align}
so that
\begin{equation}\label{eqn:KAKB}
 [\hat{K}_A, \hat{K}_B] = \hat{K}_{[A,B]}.
\end{equation}

As we know that the momentum map, $P:\mathfrak{g} \to C^{\infty}(M)$, is a Lie algebra homomorphism, and
Eq.(\ref{eqn:KAKB}) implies that a map $\hat{K}:\mathfrak{g} \to \text{SelfAdj}(\mathscr{H})$ is a homomorphism,
where $\text{SelfAdj}(\mathscr{H})$ is the algebra of Hermitian operators on the Hilbert space, $\mathscr{H}$,
the quantization map can be given by associating $P_A$ with $-i\hat{K}_A$. Therefore, it follows to
\begin{equation}
 \{P_A, P_B\} = P_{[A,B]} \mapsto -i\hat{K}_{[A,B]} = -i[\hat{K}_A, \hat{K}_B],
\end{equation}
which means the quantization map satisfies Eq.\eqref{quant}.
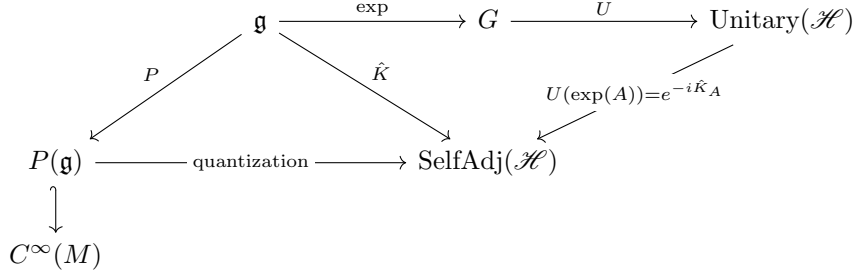
\begin{figure}
\[\begin{tikzcd}
	&& {\mathfrak{g}} && G && {\text{Unitary}(\mathscr{H})} \\
	\\
	{P(\mathfrak{g})} &&&& {\text{SelfAdj}(\mathscr{H})} \\
	{C^\infty(M)}
	\arrow["\exp", from=1-3, to=1-5]
	\arrow["P"', from=1-3, to=3-1]
	\arrow["{\hat{K}}", from=1-3, to=3-5]
	\arrow["U", from=1-5, to=1-7]
	\arrow["{U(\exp(A)) = e^{-i\hat{K}_A}}"{description}, from=1-7, to=3-5]
	\arrow["{\text{quantization}}"{description}, from=3-1, to=3-5]
	\arrow[hook', from=3-1, to=4-1]
\end{tikzcd}\]
\caption{A schematic overview of Isham's group theoretic quantization procedure.}
\end{figure}
Furthermore, the canonical group, $\mathscr{G}$, can have more than one weakly continuous, irreducible unitary
representations. In general, these irreducible representations can be deduced using Mackey's theory of induced representations \cite{mackey1968induced}.

Note that not all classical systems can be quantized employing this method, since the Hamiltonian of the classical
system could not be in the set of classical observables, $P(\mathfrak{g})$, and topologically it is possible to
construct symplectic manifolds which do not admit a transitive action of any Lie group. A necessary condition for
the existence of a transitive Lie group action is that the connected components of the manifold
should be diffeomorphic to each other, and in most physical cases such conditions are satisfied.
\section{Group theoretic Quantization of $\mathbb{R}^2$}\label{Sec:III}
Before quantizing the punctured plane, it is instructive to consider quantizing $\R^2$ using Isham's group theoretic
quantization procedure.
Let $X= \R^2$ be the configuration space of a classical system, then the corresponding phase space is $M = \R^4$, with
the usual symplectic form, $\omega$, given by \cite{arnold2006mathematical, abraham2008foundations}
\begin{align}
 \omega = \dd x \wedge \dd p_x + \dd y\wedge \dd p_y.
\end{align}
Following the quantization scheme stated out in the previous section, we quantize $\R^2$ as shown below:
\vskip 10pt\noindent
$\bullet$ \textbf{Step-I}:
Considering the canonical group, $\mathscr{G}=(\R^4,+)$, to be $\R^4$, and the action of the canonical group,
$\mathscr{G}$, on the phase space, $M=\R^4$, be
\begin{align}
 (\vb{u}, \vb{v})\cdot (\vb{x}, \vb{p}) = (\vb{x} + \vb{u}, \vb{p} - \vb{v}),
\end{align}
where the bold letters denote the corresponding tuples. It can be easily shown that
the group action of the canonical group, $\mathscr{G}=\R^4$, is symplectic and transitive on the phase
space, $M=\R^4$. The Lie algebra, $\mathfrak{g} = \R^4$, corresponding to the canonical group,
$\mathscr{G}=\R^4$, is related to the exponential map, $\exp: \mathfrak{g} \to \mathscr{G}$, as
\begin{align}
 \exp(\vb{a},\vb{b}) = (\vb{a}, \vb{b}).
\end{align}
\vskip 10pt\noindent
$\bullet$ \textbf{Step-II}: Since the fundamental vector fields corresponding to the elements of the
Lie algebra $\mathfrak{g}=\R^4$, is
\begin{equation}
 \gamma^{(\vb{a}, \vb{b})}_{(\vb{x},\vb{p})}(f) = \dv{f(\exp(-tA)(\vb{x},\vb{p}))}{t}\bigg|_{t=0},\
 f\in C^\infty( M),
\end{equation}
we find the fundamental vector fields as
\begin{align}
 \gamma^{(\vb{a}, \vb{b})} = -a_1 \pdv{}{x} - a_2 \pdv{}{y} + b_1\pdv{}{p_x} + b_2 \pdv{}{p_y}.
\end{align}
\vskip 10pt\noindent
$\bullet$ \textbf{Step-III}: For the fundamental vector fields, $\gamma^{(\vb{a}, \vb{b})}$, corresponding
to the elements in the Lie algebra, $\mathfrak{g}=\R^4$, to be Hamiltonian vector fields, there should
exist a function, $f\in C^{\infty}(\R^4)$, such that
\begin{align}
 \dd f(X) = \omega\l(\gamma^{(\vb{a}, \vb{b})},X\r),
\end{align}
where $X$ is an arbitrary vector field in the phase space, $M=\R^4$. For the symplectic form, $\omega$,
corresponding to the phase space, $M=\R^4$, we obtain

\begin{align}
 \dd f(X) = -a_1 \dd p_x(X) -b_1 \dd x(X) -a_2 \dd p_y(X) - b_2\dd y(X),
\end{align}
which leads to the function, $f(\vb{x}, \vb{p}) \in C^{\infty}(\R^4)$, explicitly as
\begin{align}
 f(\vb{x}, \vb{p}) = -\vb{a}\cdot \vb{p} - \vb{b}\cdot\vb{x}.
\end{align}
Consider the map $P$ from $\R^4 \to C^\infty(\R^4)$ given by
\begin{align}
 P_{(\vb{a}, \vb{b})} = \vb{a}\cdot \vb{p} + \vb{b}\cdot\vb{x},
\end{align}
so that $P_{(\vb{a}, \vb{b})}=-\gamma^{(\vb{a}, \vb{b})}$.
\vskip 10pt\noindent
$\bullet$  \textbf{Step-IV}: The group action is effective, since
$(\vb{x}-\vb{u},\vb{p}-\vb{v}) = (\vb{x},\vb{p})$ if and only if $\vb{u}=0$ and
$\vb{v}=0$. Therefore, it straightforwardly follows that $\gamma$ is a one-one homomorphism.
\vskip 10pt\noindent
$\bullet$ \textbf{Step-V}: The map, $P$, is homomorphic only when
\begin{align}
 P_{[(\vb{a},\vb{b}),(\vb{a'},\vb{b'})]} - \{ P_{(\vb{a}, \vb{b})},  P_{(\vb{a}', \vb{b}')} \}=0,
\end{align}
but we find that
\begin{align}
 P_{[(\vb{a},\vb{b}),(\vb{a'},\vb{b'})]} - \{ P_{(\vb{a}, \vb{b})},  P_{(\vb{a}', \vb{b}')} \}
 &= -\vb{b}\cdot \vb{a}' + \vb{b}'\cdot\vb{a},
\end{align}
which concludes that the map, $P$, is not homomorphic. To make it homomorphic, we extend the Lie
algebra, $\mathfrak{g}=\R^4$, to the algebra $\mathfrak{h} = \R^4 \oplus \R$, with the Lie bracket
\begin{align}
 [(\vb{a}, \vb{b}, r), (\vb{a}', \vb{b}', s)] = (0, 0, \vb{b}\cdot \vb{a}' - \vb{b}'\cdot\vb{a}),
\end{align}
which is nothing but the Heisenberg algebra. The unique Lie group, $H$, corresponding to the Lie
algebra, $\mathfrak{h}$, is the Heisenberg group with the group product
\begin{align}
 (\vb{u}, \vb{v}, t)\cdot (\vb{u}', \vb{v}', t') = (\vb{u+ u'}, \vb{v+ v'}, t+t'
 + \f{1}{2} (\vb{u}'\cdot\vb{v} - \vb{u}\cdot\vb{v}')).
\end{align}
Furthermore, the new momentum map, $P'_{(\vb{a}, \vb{b}, r)}$, is given by
\begin{align}
 P'_{(\vb{a}, \vb{b}, r)} = \vb{a}\cdot \vb{p} + \vb{b}\cdot\vb{x} + r.
\end{align}
\vskip 10pt\noindent
$\bullet$ \textbf{Step-VI}: Consider an irreducible unitary representation,
$\mathcal{U}: \mathscr{G} \to \text{Unitary}({\mathscr{H}})$, of the canonical group,
$\mathscr{G}=H_{5}$, \ie,
$(\vb{u}, \vb{v}, t) \mapsto \mathcal{U}(\exp(\vb{a}, \vb{b}, r))$.
Defining the operators $U, V$, and $W$ as
\begin{align}
 U(\vb{a}) \equiv \mathcal{U}(\exp(\vb{a}, 0,0)),\ V(\vb{b}) \equiv \mathcal{U}(\exp(0, \vb{b},0)),
 \And W(r) \equiv \mathcal{U}(\exp(0,0,r)),
\end{align}
we find that they satisfy the Weyl relations
\begin{align}\label{eqn:R2Weyl}
 U(\vb{a}) U(\vb{a}') &= U(\vb{a} + \vb{a}')\\
 V(\vb{b})V(\vb{b}') &= V(\vb{b}+\vb{b}')\\
 U(\vb{a})V(\vb{b}) &= V(\vb{b})U(\vb{a})W(-\vb{a}\cdot\vb{b}).
\end{align}
Furthermore, due to Stone's theorem \cite{prugovecki1982quantum}, the operators $U, V$, and $W$ can be
expressed as the exponential of a densely defined self-adjoint operators as
\begin{align}\label{eqn:UVW}
 U(\vb{a}) = e^{-i(a_1\hat{p_x} + a_2\hat{p_y})},\ V(\vb{b}) = e^{-i(b_1\hat{x} + b_2\hat{y})},
 \And W(r) = e^{-ir\hat{z}}.
\end{align}
Substituting the expressions in Eq.(\ref{eqn:UVW}) in the Weyl relations in Eq.(\ref{eqn:R2Weyl}),
we deduce the commutation relations corresponding to the self-adjoint operators to be
\begin{align}
 [\hat{x}_i, \hat{x}_j] &= 0,\ [\hat{p_i}, \hat{p_j}] = 0,\ [\hat{x}_i, \hat{p_i}] = i\hat{z}\\
 [\hat{x}_i, \hat{z}] &= 0, \And [\hat{p}_i, \hat{z}] = 0.
\end{align}
    
Moreover, from the Stone-Von Neumann theorem \cite{prugovecki1982quantum}, we know that the only irreducible
unitary representation of the Heisenberg group, $H_5$, is of the form
\begin{align}\label{rep}
 \mathcal{U}(\exp(\vb{a}, \vb{b}, r)) \psi(\vb{x}) = e^{-i\vb{b}\cdot\vb{x}-i\mu r} \psi(\vb{x}
 -\mu \vb{a}),
\end{align}
where $\psi \in L^2(\R^2)$, with the Lebesgue measure, $\dd x\dd y$, on $\R^2$. The parameter,
$\mu\in \R$, is a free parameter, and for each $\mu$, we have unitarily inequivalent representations.
Furthermore, the representation of the self adjoint operators, $\hat{x}_i$, and $\hat{p}_i$, can be
found from Eq.\eqref{rep} as
\begin{align}
 \hat{x}_i \psi = x\psi,\ \hat{p}_i \psi = -i\mu \pdv{\psi}{x_i}, \And \hat{z}\psi = \mu \psi.
\end{align}
\section{Group theoretic quantization of the punctured plane}\label{Sec:IV}
Let us now turn to the case of quantizing the punctured plane using group theoretic quantization procedure.
Let the punctured plane, $X = \R^2-\{0\}$, with the usual subspace topology be the configuration space corresponding
to a classical system. Since the punctured plane is a submanifold of $\R^2$, and is diffeomorphic to $S^1 \times \R^+$,
the punctured plane, $X = \R^2-\{0\}$, becomes a multiply-connected space with fundamental group $\Z$.
The phase space, \ie, the cotangent bundle, of the punctured plane, $X$, can be straightforwardly found to be
$M = T^*X = X \times \R^2$. Any point in the phase space, $M$, can be written as $(x,y,p_x, p_y)$, where $(x,y)\in X$,
and $(p_x, p_y)\in \R^2$. The symplectic form, $\omega$, in the phase space, $M$, is induced from $\R^4$, which can be
explicitly written as
\begin{equation}
 \omega = \dd x \wedge \dd p_x + \dd y \wedge \dd p_y.
\end{equation}
Quantizing the phase space, $M=X \times \R^2$, corresponding to the punctured plane, $X = \R^2-\{0\}$,
using group theoretic quantization procedure is carried out as shown below:
%
\vskip 10pt\noindent
$\bullet$ \textbf{Step-I}: Consider the canonical group, $\mathscr{G}$,
as $\mathscr{G} = \R^2 \rtimes (SO(2)\times \R^+)$, with the group product given by
\begin{align}
 (\vb{u},A_\t,\lambda)(\vb{u}', A_{\t'}, \lambda')
 = (\vb{u}+ \lambda^{-1}A_\t\vb{u}', A_{\t+\t'}, \lambda \lambda'),
\end{align}
where $A_{\t}$ is
\begin{equation}
 A_{\t} \equiv
 \begin{pmatrix}
  \cos\t & -\sin\t\\
  \sin\t & \cos\t
 \end{pmatrix}.
\end{equation}
Note that this group product is nothing but the semidirect product induced by the homomorphism,
$\Phi: SO(2) \times \R^+ \to \text{Aut}(\R^2)$, which maps $(A, \lambda) \mapsto \lambda^{-1}A$. Since the semidirect
product of two Lie groups is also a Lie group, one can conclude that the canonical group,
$\mathscr{G}=\R^2 \rtimes (SO(2)\times \R^+)$, is a Lie group as both $\R^2$ and $SO(2)\times \R^+$ are Lie groups.

We consider the action of the canonical group, $\mathscr{G} = \R^2 \rtimes (SO(2)\times \R^+)$,
on the phase space, $M=X \times \R^2$, as
\begin{equation}
 (\vb{u}, A_\t, \lambda)\cdot (\vb{x}, \vb{p}) = (\lambda A_\t \vb{x}, \lambda^{-1} A_\t \vb{p} - \vb{u}),
\end{equation}
and it can be shown straightforwardly that the group action is transitive, and preserves the symplectic form,
$\omega$, corresponding to the phase space, $M=X \times \R^2$. Furthermore, the Lie algebra,
$\mathfrak{g}$, associated with
the canonical group, $\mathscr{G} = \R^2 \rtimes (SO(2)\times \R^+)$,
is a four dimensional algebra, whose elements are represented by
$(b_1,b_2,\t,r) \in \mathfrak{g}$. The basis elements of the Lie algebra, $\mathfrak{g}$, is related to the elements
of the canonical group, $\mathscr{G} = \R^2 \rtimes (SO(2)\times \R^+)$,
through the exponential map, $\exp: \mathfrak{g} \to \mathscr{G}$, as
\begin{align}
 \exp(b_1,0,0,0) &= (b_1,0,I_2,1), \\
 \exp(0,b_2,0,0) &= (0,b_2,I_2,1), \\
 \exp(0,0,\t,0) &= (0,0,A_\t,1), \\
 \exp(0,0,0,r) &= (0,0,I_2,e^r).
\end{align}
Employing the identity in Eq.\eqref{bkh}, the Lie bracket on the Lie algebra, $\mathfrak{g}$, is determined to be
\begin{align}
 [(b_1, b_2, \t, r),(b'_1,b'_2,\t',r')]
 = \l(\t' b_2 - \t b'_2 + r'b_1 - rb'_1, \t b'_1 - \t' b_1 + r'b_2 - rb'_2, 0, 0\r).
\end{align}
\vskip 10pt\noindent
$\bullet$ \textbf{Step-II}: Since the fundamental vector fields corresponding to the elements
of the Lie algebra, $\mathfrak{g}$, is
\begin{equation}
 \gamma^{(\vb{b},\t,r)}_{(\vb{x},\vb{p})}(f) = \dv{f(\exp(-t(\vb{b},\t,r))(\vb{x},\vb{p}))}{t}
 \bigg|_{t=0},\ f\in C^\infty( M),
\end{equation}
we find the fundamental vector fields in the phase space, $M$, as
\begin{align}
 \gamma^{(\vb{b}, \t, r)} &= (\t y -r x) \pdv{}{x} + (-\t x- r y) \pdv{}{y}
 + (r p_x + \t p_y + b_1) \pdv{}{p_x} \nn \\
 &+ (r p_y - \t p_x + b_2)\pdv{}{p_y}.
\end{align}
\vskip 10pt\noindent
$\bullet$ \textbf{Step-III}: In order to show that the fundamental vector fields, $\gamma^{(\vb{b},\t,r)}$, are
Hamiltonian vector fields, one has to construct an observable, say, $f\in C^\infty(M)$, such that
$\dd f(\cdot) = \omega(\gamma^{(\vb{b},\t,r)},\cdot)$. For this purpose, we consider the classical observable
\begin{align}
 f(x,y,p_x,p_y) = -(rxp_x + ryp_y + b_1 x+ b_2 y -\t yp_x + \t xp_y),
\end{align}
which evidently satisfies the required property, and this proves that the vector field, $\gamma^{(\vb{b},\t,r)}$,
is a Hamiltonian vector field. Employing the explicit form of the classical observables, $f$, we define the map,
$P:\mathfrak{g}\to C^\infty(M)$, as
\begin{align}
 P_{(\vb{b}, \t, r)} = r\vb{x}\cdot \vb{p} + \vb{b}\cdot \vb{x} + \t\vb{x}\cdot C \vb{p},
\end{align}
where
\begin{align*}
 C \equiv
 \begin{pmatrix}
  0 & 1\\
  -1 & 0
 \end{pmatrix},
\end{align*} 
so that 
\begin{align}
 \xi_{P_{(\vb{b},\t,r)}} = -\gamma^{(\vb{b},\t,r)}. 
\end{align}
\vskip 10pt\noindent
$\bullet$ \textbf{Step-IV}: The action of the canonical group,
$\mathscr{G} = \R^2 \rtimes (SO(2)\times \R^+)$, on the phase space, $M=X \times \R^2$, is effective,
since $(\lambda A_\t \vb{x}, \lambda^{-1}A_\t\vb{p}-\vb{u}) = (\vb{x}, \vb{p})$, if and only if
$(A_\t - \lambda^{-1}I_2)\vb{x} = 0$ and $(A_\t-\lambda I_2)\vb{p} = \vb{u}$. Since the matrix, $A_\t$, has no
positive real eigenvalues, unless $\t=0$, in which case the eigenvalue is $1$, so that 
$A_\t = I_2$, $\lambda=1$,
and $\vb{u}=0$. Furthermore, as discussed in Step-IV of section \ref{Sec:II}, the effectiveness of the action
of the canonical group, $\mathscr{G} = \R^2 \rtimes (SO(2)\times \R^+)$, on the phase space,
$M=X \times \R^2$, implies that the map, $\gamma: \mathfrak{g} \to \text{HamVF}(M)$, is one-one.
\vskip 10pt\noindent
$\bullet$ \textbf{Step-V}: Similarly, as discussed in Step-V of section \ref{Sec:II}, the map
$P:\mathfrak{g}\to C^\infty(M)$ is a homomorphism, if
$\{P_{(\vb{b}, \t,r)},P_{(\vb{b}', \t',r')} \} = P_{[(\vb{b}, \t, r),(\vb{b}', \t', r)]}$,
which can be verified as below:
\begin{align}
 \{P_{(\vb{b}, \t,r)},P_{(\vb{b}', \t',r')} \}
 &= \pdv{P_{(\vb{b}, \t,r)}}{\vb{x}} \pdv{P_{(\vb{b}', \t',r')}}{\vb{p}}
 - \pdv{P_{(\vb{b}, \t,r)}}{\vb{p}}\pdv{P_{(\vb{b}', \t',r')}}{\vb{x}}, \nonumber\\
 &= \l[(r\vb{p} + \vb{b} + \t C\vb{p})\cdot(r'\vb{x} - \t' C\vb{x})\r]
 - \l[(r'\vb{p} + \vb{b}' + \t' C\vb{p})\cdot(r\vb{x} - \t C\vb{x})\r], \nonumber\\
 &= \vb{x}\cdot \l(r'\vb{b} - r\vb{b} - \t C\vb{b} + \t'C\vb{b}'\r), \nonumber\\
 &= P_{(\t' b_2 - \t b'_2 + r'b_1 -rb'_1, \t b'_1 - \t b_1 + r'b_2 - rb'_2)}, \nonumber\\
 &= P_{[(\vb{b}, \t, r),(\vb{b}', \t', r)]}.
 \end{align}
Therefore, it is evident that the map, $P:\mathfrak{g}\to C^\infty(M)$, is a momentum map.
\vskip 10pt\noindent
$\bullet$ \textbf{Step-VI}: Let $\mathcal{U}:\mathscr{G}\to \text{Unitary}(\mathscr{H})$ be an irreducible
unitary representation of the canonical group, $\mathscr{G} = \R^2 \rtimes (SO(2)\times \R^+)$, which maps
$(\vb{u}, A_\t, \lambda) \mapsto \mathcal{U}(\vb{u}, A_\t, \lambda)$. Defining the operators,
\begin{align}
 U(\t, r) =\mathcal{U}(\exp(0, \t, r)) \And V(\vb{b}) = \mathcal{U}(\exp(\vb{b}, 0, 0)),
\end{align}
where $r = \log(\lambda)$, we obtain the Weyl-like relations as
\begin{align}\label{eq:weyl-like}
 \begin{split}
  U(\t,r)U(\t',r')&= U(\t+ \t', r+r'),\\
  V(\vb{b})V(\vb{b}') &=V(\vb{b} + \vb{b}'),\\
  U(\t, r)V(\vb{b}) &=V(e^{-r}A_\t\vb{b})U(\t,r).
 \end{split}
\end{align}
Since the operators, $U$ and $V$, can be further written as a product of unitary operators, we essentially
have four one parameter group of unitary operators. Using Stone's theorem \cite{prugovecki1982quantum} on each of
them, we write
\begin{align}
 U(\t, r) = e^{-i(\t\hat{\pi}_1 + r\hat{\pi}_2)/\hbar} \And V(\vb{b}) = e^{-i(b_1 \hat{c} + b_2 \hat{s})/\hbar}.
\end{align}
Since $SO(2)\times \R^+$ and $\R^2$ are both Abelian groups, it follows that
\begin{align}
 [\hat{\pi}_1, \hat{\pi}_2] = 0, ~~ \text{and} ~~ [\hat{c}, \hat{s}] = 0.
\end{align}
Furthermore, employing the identity in Eq. \eqref{bkh}, we find the commutation relations as
\begin{align}
 [\hat{s}, \hat{\pi}_1] &= i\hat{c}, ~~~~ [\hat{c}, \hat{\pi}_1] = -i\hat{s}\label{com1}, \\
 [\hat{s}, \hat{\pi}_2] &= i\hat{s}, ~~~~ [\hat{c}, \hat{\pi}_2] = i\hat{c}\label{com2}.
\end{align}
Therefore, the quantization map for the phase space, $M=X \times \R^2$,
with the canonical group, $\mathscr{G} = \R^2 \rtimes (SO(2)\times \R^+)$, is given by
\begin{align}
 P^{(\vb{b},0,0)} = b_1 x + b_2 y \mapsto -i(b_1 \hat{c} + b_2\hat{s}),
\end{align}
and
\begin{align}
 P^{(0,\t,r)} = rxp_x + ryp_y - \t yp_x + \t x p_y \mapsto -i(\t\hat{\pi}_1 + r\hat{\pi}_2).
\end{align}
     
Let $\mathscr{H}$ be the space of all square integrable functions on $X=\mathbb{R}^2-\{0\}$
with respect to the measure
$\dd \mu = \dd \phi\dd\rho/(2\pi\rho)$. A weakly continuous, irreducible unitary representation of the
canonical group, $\mathscr{G}$, in $\text{Unitary}(L^2(X))$ can be written as
\begin{subequations}\label{eq:reps}
\begin{align}
 U(\t, \lambda) \psi(\p, \rho) &= \psi((\p-\t)\text{mod}(2\pi), \lambda^{-1}\rho), \label{eq:repa}\\
 V(\vb{b}) \psi(\p, \rho) &= e^{-i(b_1\cos\p + b_2\sin\p)\rho/\hbar} \psi(\p, \rho),\label{eq:repb}
\end{align}
\end{subequations}
where $\psi\in L^2(X)$. Therefore, the representation of the self-adjoint operators can be deduced by
expanding both sides of the Eqs. \eqref{eq:repa} and \eqref{eq:repb} as a power series as
\begin{align}
 \hat{c} \psi(\p, \rho) &= (\rho \cos\p) \psi(\p, \rho),
 ~~~~ \hat{s} \psi(\p, \rho) = (\rho \sin\p) \psi(\p, \rho), \\
 \hat{\pi}_1 \psi(\p, \rho) &= -i\hbar\pdv{\psi}{\phi},
 ~~~~ \hat{\pi}_2 \psi(\p, \rho) = -i \hbar\rho \pdv{\psi}{\rho}.
\end{align}
    
As discussed in Step-VI of section \ref{Sec:II}, if the canonical group,
$\mathscr{G} = \R^2 \rtimes (SO(2)\times \R^+)$, admits universal
covering groups, then the concerned covering group could also be considered as a new canonical group associated
with the phase space, $M$ \cite{isham}. Consider the universal covering group,
$\tilde{\mathscr{G}} = \R^2 \rtimes (\R\times \R^+)$, with a universal covering map,
$\pi:\tilde{\mathscr{G}} \to \mathscr{G}$, that has the group product
\begin{align}
 (\vb{u}, \t, \lambda)\cdot (\vb{u}', \t', \lambda')
 = (\vb{u} + \lambda^{-1} A_{\t}\vb{u}', \t+\t', \lambda\lambda'),
\end{align}
and $\pi(\vb{u},\t,\lambda)= (\vb{u}, A_\t, \lambda)$. The action of the universal covering group,
$\tilde{\mathscr{G}}$, on the phase space, $M$, is induced from the action of the base group,
$\mathscr{G} = \R^2 \rtimes (SO(2)\times \R^+)$, on
the phase space, $M$, \ie $\ \tilde{g}\cdot (\vb{x},\vb{p}) = \pi(g)\cdot x$, where $\tilde{g}\in \tilde{\mathscr{G}}$.
The covering group, $\tilde{\mathscr{G}}$, acts effectively everywhere, except for the discrete subgroup $2\pi \Z$ of
$\tilde{\mathscr{G}}$. Employing the twisted representations \cite{isham}, we find a family of inequivalent
unitary representations, parametrized by $\alpha \in \R$, of the new
canonical group, $\tilde{\mathscr{G}}$, which are given by
\begin{subequations}\label{neq_rep}
\begin{align}
 U(\t, \lambda) \psi(\p, \rho) = e^{-i\a \t} \psi((\p -\t)\text{mod}(2\pi), \lambda^{-1}\rho), \\
 V(\vb{b}) \psi(\p, \rho) = e^{-i(b_1\cos(\p) + b_2\sin(\p))\rho/\hbar} \psi(\p, \rho),
\end{align}
\end{subequations}
and which results in the operators as
\begin{align}
 \hat{\pi}_1 \psi(\p, \rho) = -i\hbar\pdv{\psi}{\phi} + \hbar\a \psi,
 &\And \hat{\pi}_2 \psi(\p, \rho) = -i \hbar\rho \pdv{\psi}{\rho}, \\
 \hat{c}\psi(\phi,\rho) = \rho \cos(\phi)\psi(\phi,\rho),
 &\And \hat{s}\psi(\phi,\rho) = \rho \sin(\phi)\psi(\phi,\rho).
\end{align}
\section{Conclusion}\label{Sec:V}
We quantize the punctured plane, \ie, ${\mathbb R}^2-\{0\}$, using Isham's group theoretic
quantization procedure. We find the phase space, $M$, corresponding to the punctured plane to be
$M =\R^2\times (S^1\times \R)$, with a natural symplectic form, $\omega$, and identify the canonical group, ${\mathscr G}$, to be ${\mathscr G}=\R^2\rtimes(\text{SO}(2,\R)\times \R^+)$, and determine the corresponding Lie algebra, $\mathfrak{g}$, to be a four dimensional vector space with the Lie bracket 
\begin{align*}
 [(b_1, b_2, \t, r),(b'_1,b'_2,\t',r')]
 = \l(\t' b_2 - \t b'_2 + r'b_1 - rb'_1, \t b'_1 - \t' b_1 + r'b_2 - rb'_2, 0, 0\r).
\end{align*}
We find the fundamental vector fields, $\gamma^A$, corresponding to the lie algebra element $A\in \mathfrak{g}$, and show that
all the fundamental vector fields are Hamiltonian vector fields with respect to the symplectic form, $\omega$.
We also note that the action of the canonical group, ${\mathscr G}=\R^2\rtimes(\text{SO}(2,\R)\times \R^+)$,
on the phase space, $M =\R^2\times (S^1\times \R)$, is effective, and thus the algebra homomorphism
$\gamma: \mathfrak{g} \to \text{HamVF}(M)$ is one-one. 
This allows one to define the momentum map, $P$, which we determine to be an algebra homomorphism from
$\mathfrak{g}$ into $C^\infty(M)$.
Using an explicit representation of the canonical group,
${\mathscr G}=\R^2\rtimes(\text{SO}(2,\R)\times \R^+)$, in the group of unitary operators on the
Hilbert space, $\mathscr{H} = L^2(S^1\times \R)$, we determine "Weyl-like" relations. Using Stone's
theorem and Baker-Campbell-Hausdorff formula, we find self-adjoint operators,
$\hat{c}, \hat{s}, \hat{\pi}_1,$ and $ \hat{\pi}_2$ on the Hilbert space, $\mathscr{H}$,
and the commutation relations between these are found, as shown in Eqs. \eqref{com1} and \eqref{com2}.
Finally, using the fact that the universal covering group, $\tilde{\mathscr{G}}$, of the canonical group,
${\mathscr G}=\R^2\rtimes(\text{SO}(2,\R)\times \R^+)$, is also admissible as a canonical group, and using
the theory of twisted representations \cite{isham}, we find the representations of the universal
covering group, $\tilde{\mathscr{G}}$.
\bibliography{sn-bibliography}
\begin{appendices}
\section{$\gamma:\mathfrak{g}\to \text{VF}(M)$ is homomorphic}
\label{app:C}
\begin{definition}
    Let $F:M\to N$ be a smooth map between manifolds $M$ and $N$, and let $X,\tilde{X}$ be vector fields
    in manifolds, $M$ and $N$, respectively, then the vector fields, $X$ and $\tilde{X}$, are $F-$related, if
    \begin{align}
      F_{*,p}(X_p) = \tilde{X}_{F(p)},
    \end{align}
    for all $p\in M$, where $F_{*,p}$ is the induced push-forward map between the tangent space, $T_p(M)$, at $p\in M$, and the tangent space, $T_{F(p)}(N)$, at $F(p)\in N$.
\end{definition}
\vskip 10pt\noindent
\begin{definition}
    Let $\mathscr{G}$ be a Lie group with Lie algebra, $\mathfrak{g}$, and $A\in \mathfrak{g}$. For any group element, $g\in \mathscr{G}$, let $r_g:\mathscr{G}\to \mathscr{G}$ be the map defined as $g'\mapsto g'g$ and let $(r_g)_*$ be the induced map on the tangent spaces $T_e(\mathscr{G})\to T_g(\mathscr{G})$. Then we define a right-invariant vector field $R^A$ in the Lie group, $\mathscr{G}$, as
    $$R^A_{g} = (r_g)_*(A)$$
\end{definition}
\vskip 10pt\noindent
We state two well-known results \cite{loringtu} below, that are useful in our proof.
\begin{theorem}\label{Tu1}
Let $F:M\to N$ be a smooth map between the manifolds, $M$ and $N$, and let $X$ and $\tilde{X}$ be vector fields in the manifolds, $M$ and $N$,  respectively, then the vector fields, $X$ and $\tilde{X}$, are $F-$related, if and only if
\begin{align}
    X(f\circ F) = \tilde{X}(f) \circ F,
\end{align}
for all $f\in C^\infty(N)$.
\end{theorem}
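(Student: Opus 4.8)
The plan is to reduce the statement to the defining property of the push-forward map, $F_{*,p}$, and then to unwind the resulting pointwise identities into an identity of functions on $M$. The engine of the whole argument is the characterization of the push-forward through its action on functions: for a tangent vector $v\in T_p(M)$ and a smooth function $g\in C^\infty(N)$, one has $(F_{*,p}(v))(g) = v(g\circ F)$. I would state and invoke this identity at the very beginning, since everything else is a matter of carefully translating between points and functions.

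First I would fix $p\in M$ together with an arbitrary test function $f\in C^\infty(N)$ and rewrite the defining condition for $F$-relatedness. Testing the equality $F_{*,p}(X_p) = \tilde{X}_{F(p)}$ against $f$ and applying the push-forward characterization yields $X_p(f\circ F) = \tilde{X}_{F(p)}(f)$. I would then recognize each side as the value at $p$ of a function on $M$: the left-hand side is $(X(f\circ F))(p)$, by definition of the vector field $X$ acting on the smooth function $f\circ F\in C^\infty(M)$, while the right-hand side is $(\tilde{X}(f)\circ F)(p)$. Consequently, $F$-relatedness at the point $p$ is exactly the equality of these two values, for every $f$.

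The forward direction is then immediate: if $X$ and $\tilde{X}$ are $F$-related, the pointwise equality $(X(f\circ F))(p) = (\tilde{X}(f)\circ F)(p)$ holds for every $p\in M$, so the two functions $X(f\circ F)$ and $\tilde{X}(f)\circ F$ agree everywhere, which is the desired identity. For the converse I would argue that a tangent vector at a point is completely determined by its action on $C^\infty(N)$; hence if $(F_{*,p}(X_p))(f) = \tilde{X}_{F(p)}(f)$ for all $f\in C^\infty(N)$, the two tangent vectors $F_{*,p}(X_p)$ and $\tilde{X}_{F(p)}$ must coincide, giving $F$-relatedness at $p$, and then at every $p$ by quantification.

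The main obstacle is this last claim in the converse, namely that a tangent vector in $T_{F(p)}(N)$ is determined by its action on \emph{globally} defined smooth functions on $N$. I expect to handle it with a standard bump-function argument: working in a coordinate chart about $F(p)$, the coordinate functions can be multiplied by a bump function to produce elements of $C^\infty(N)$, and a point derivation is fixed by its values on such a coordinate system; therefore agreement on all of $C^\infty(N)$ forces agreement on a coordinate basis of $T_{F(p)}(N)$ and hence equality of the two tangent vectors. Everything else reduces to unwinding the definitions.
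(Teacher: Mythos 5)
The paper does not actually prove this statement: Theorem \ref{Tu1} is quoted as a known result and attributed to Ref. \cite{loringtu}, so there is no in-paper argument to compare yours against. Your proof is correct, and it is essentially the standard textbook argument: the identity $(F_{*,p}(v))(g) = v(g\circ F)$ is the defining property of the push-forward, and both directions of the equivalence follow by evaluating the two functions at an arbitrary point $p\in M$ and recognizing $X_p(f\circ F) = (X(f\circ F))(p)$ and $\tilde{X}_{F(p)}(f) = (\tilde{X}(f)\circ F)(p)$. One place where your write-up is more careful than a bare citation would suggest: because the theorem as stated here uses \emph{globally} defined test functions $f\in C^\infty(N)$ (Tu phrases the result for functions on arbitrary open subsets of $N$), the converse genuinely needs your final step, namely that a tangent vector at $F(p)$ is determined by its action on $C^\infty(N)$ alone. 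Your plan for closing that gap — extend coordinate functions near $F(p)$ to global smooth functions by a bump function, then use that a point derivation is determined by its values on a coordinate system — is the standard and correct way to do it; without it, the converse would only give agreement of the derivations on global functions, which a priori is weaker when tangent vectors are defined via germs. So the proposal is sound and self-contained, whereas the paper simply outsources the statement to the literature.
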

\vskip 10pt\noindent
\begin{theorem}\label{Tu2}
Let $F:M\to N$ be a smooth map between the manifolds, $M$ and $N$, and let $X$ and $\tilde{X}$ be vector fields in the manifolds, $M$ and $N$, respectively. If the vector fields, $X$ and $\tilde{X}$,
are $F-$related, and if the vector fields, $Y$ and $\tilde{Y}$, are $F-$related, then the vector fields,
$[X,Y]$ and $[\tilde{X},\tilde{Y}]$, are also $F-$related.
\end{theorem}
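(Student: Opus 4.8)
The plan is to reduce the claim entirely to the functional characterization of $F$-relatedness supplied by Theorem \ref{Tu1}. That result says that a pair of vector fields $X$ on $M$ and $\tilde{X}$ on $N$ is $F$-related precisely when $X(h\circ F) = \tilde{X}(h)\circ F$ for every $h\in C^\infty(N)$. So, rather than manipulating push-forwards on tangent spaces directly, I would verify the desired $F$-relatedness of the brackets by checking the single identity
\begin{equation*}
 [X,Y](f\circ F) = [\tilde{X},\tilde{Y}](f)\circ F,\qquad \forall f\in C^\infty(N),
\end{equation*}
after which Theorem \ref{Tu1} immediately yields that $[X,Y]$ and $[\tilde{X},\tilde{Y}]$ are $F$-related.

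To establish that identity, I would expand the Lie bracket by its definition as a commutator of derivations, writing $[X,Y](f\circ F) = X\big(Y(f\circ F)\big) - Y\big(X(f\circ F)\big)$. The hypothesis that $Y$ and $\tilde{Y}$ are $F$-related lets me replace the inner term $Y(f\circ F)$ by $\tilde{Y}(f)\circ F$. The crucial observation is that $\tilde{Y}(f)$ is itself a smooth function on $N$; consequently the hypothesis that $X$ and $\tilde{X}$ are $F$-related applies to it, giving $X\big(\tilde{Y}(f)\circ F\big) = \tilde{X}\big(\tilde{Y}(f)\big)\circ F$. Treating the second term symmetrically, I would obtain $[X,Y](f\circ F) = \big(\tilde{X}\tilde{Y}(f) - \tilde{Y}\tilde{X}(f)\big)\circ F = [\tilde{X},\tilde{Y}](f)\circ F$, which is exactly what is required.

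I expect the only genuinely subtle point to be the nested application of the characterization: one must recognize that $\tilde{Y}(f)$ plays the role of the test function $h\in C^\infty(N)$ in the second invocation of Theorem \ref{Tu1}, so that $F$-relatedness of $X$ is applied not to $f$ but to the newly produced function $\tilde{Y}(f)$. Everything else is a routine commutator expansion, so no machinery beyond Theorem \ref{Tu1} and the definition of the bracket as a derivation should be needed.
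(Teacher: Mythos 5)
Your proof is correct and complete: the nested use of the functional characterization (applying $F$-relatedness of $X$, $\tilde{X}$ to the test function $\tilde{Y}(f)\in C^\infty(N)$, and symmetrically) is exactly the right mechanism, and the commutator expansion then gives $[X,Y](f\circ F) = [\tilde{X},\tilde{Y}](f)\circ F$, so Theorem \ref{Tu1} yields the claim. The paper itself does not prove this statement — it cites it as a known result from Ref.~\cite{loringtu} — and your argument is precisely the standard proof given there, so there is nothing to add.
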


\vskip 10pt
In this appendix we show that the map, $\gamma:\mathfrak{g}\to \text{VF}(M)$, from the Lie algebra, $\mathfrak{g}$, of the Lie group, $\mathscr{G}$, to the Poisson algebra of vector fields on the phase space, $M$, is a Lie algebra homomorphism. This essentially involves showing that the map, $\gamma$, is
a linear map, and it preserves the Lie bracket expression.
The key idea underlying the proof is to show that for any Lie algebra element, $A\in \mathfrak{g}$, the Hamiltonian vector fields, $\gamma^A$, and the right invariant vector field, $R^{-A}$ are $t^x-$related, where we define $t^x:\mathscr{G}\to M$ to be the map between the canonical group, $\mathscr{G}$, and the phase space, $M$. We sketch out the proof in Theorem \ref{thm:tx-related} below.
\vskip 10pt\noindent
\begin{theorem}\label{thm:tx-related}
    Let $\mathscr{G}$ be a Lie group with Lie algebra, $\mathfrak{g}$, $A\in \mathfrak{g}$, and $t^x:\mathscr{G}\to M$ be a map between the canonical group, $\mathscr{G}$, and the phase space, $M$,
then the vector fields, $\gamma^A$ and $R^{-A}$, are $t^x-$related for any $x\in M$, and
the vector fields, $[\gamma^A,\gamma^B]$ and $[R^{-A},R^{-B}]$, are also $t^x-$related.
\end{theorem}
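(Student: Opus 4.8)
The plan is to first identify the map $t^x$ explicitly as the orbit map $t^x(g)=g\cdot x$, and then reduce the entire statement to the two $F$-relatedness criteria recorded in Theorems \ref{Tu1} and \ref{Tu2}. Taking $F=t^x$, with $\mathscr{G}$ playing the role of the source manifold and the phase space $M$ the role of the target, the first claim is that the pair $(R^{-A},\gamma^A)$ is $t^x$-related. By Theorem \ref{Tu1} this is equivalent to the identity $R^{-A}(f\circ t^x)=\gamma^A(f)\circ t^x$ holding for every $f\in C^\infty(M)$, and this is the equality I would verify directly.

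First I would unpack the left-hand side. Since $-A\in T_e(\mathscr{G})$ is represented by the curve $t\mapsto \exp(-tA)$, the defining relation $R^{-A}_g=(r_g)_*(-A)$ shows that $(r_g)_*$ simply post-composes this curve with right multiplication $r_g$, so that for any $h\in C^\infty(\mathscr{G})$,
\begin{equation*}
 R^{-A}_g(h) = \dv{h(\exp(-tA)\,g)}{t}\bigg|_{t=0}.
\end{equation*}
Applying this to $h=f\circ t^x$ and then using that the action is a left action, so that $(\exp(-tA)g)\cdot x=\exp(-tA)\cdot(g\cdot x)$, I would obtain
\begin{equation*}
 R^{-A}_g(f\circ t^x) = \dv{f\big(\exp(-tA)\cdot(g\cdot x)\big)}{t}\bigg|_{t=0} = \gamma^A_{g\cdot x}(f) = \big(\gamma^A(f)\circ t^x\big)(g),
\end{equation*}
where the middle equality is exactly the definition of the fundamental vector field evaluated at the point $y=g\cdot x=t^x(g)$. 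This establishes the criterion of Theorem \ref{Tu1}, and hence that $R^{-A}$ and $\gamma^A$ are $t^x$-related.

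For the second assertion I would not recompute anything. Once $R^{-A}$ is $t^x$-related to $\gamma^A$ and $R^{-B}$ is $t^x$-related to $\gamma^B$, Theorem \ref{Tu2} immediately yields that $[R^{-A},R^{-B}]$ and $[\gamma^A,\gamma^B]$ are $t^x$-related, which is the remaining claim.

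I expect the only genuinely delicate point to be the bookkeeping in the second display: one must use the right-invariant field $R^{-A}$, with the minus sign, precisely so that the curve $\exp(-tA)$ ends up on the \emph{left} of $g$, where the action axiom can factor out $g\cdot x$ and reproduce the fundamental-vector-field definition. Were one to use a left-invariant field or the opposite sign, the exponential would sit on the wrong side of $g$ and the definition of $\gamma^A$ would not match. Everything else in the argument is a direct substitution.
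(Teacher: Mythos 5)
Your proposal is correct and follows essentially the same route as the paper's own proof: a direct verification of the criterion of Theorem \ref{Tu1}, by unwinding $R^{-A}_g(f\circ t^x)$ as the $t$-derivative at $t=0$ of $f$ along the curve $\exp(-tA)\,g\cdot x$ and recognizing the result as $\gamma^A_{g\cdot x}(f)=\big(\gamma^A(f)\circ t^x\big)(g)$, followed by an appeal to Theorem \ref{Tu2} for the bracket statement. The only cosmetic difference is that you state explicitly the orbit-map identification $t^x(g)=g\cdot x$ and the left-action identity $(\exp(-tA)g)\cdot x=\exp(-tA)\cdot(g\cdot x)$, which the paper uses implicitly in passing from $f\circ t^x(\exp(-tA)g)$ to $f(\exp(-tA)gx)$.
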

\begin{proof}
    Let $g\in \mathscr{G}$, and let $f:M\to \R$ be a smooth map, then
    \begin{align}
        R^{-A}_g(f\circ t^x) &= (r_g)_*(-A)(f\circ t^x), \\
        &= -A(f\circ t^x \circ r_g), \\
        &= \dv{f\circ t^x\circ r_g(\exp(-tA))}{t}\bigg|_{t=0}, \\
        &= \dv{f\circ t^x(\exp(-tA)g)}{t}\bigg|_{t=0}, \\
        &= \dv{f(\exp(-tA)gx)}{t}\bigg|_{t=0}, \\
        &= \gamma^A_{gx}(f) = \gamma^A(f)\circ t^x(g),
    \end{align}
where in the third equality we made use of the fact that the tangent to the curve,
$\exp(-tA)$, at $t=0$ is $-A$. Therefore, due to Theorem \ref{Tu1}, it is evident that
the vector fields, $\gamma^A$ and $R^{-A}$, are $t^x-$related for any $x\in M$, and it follows
from Theorem \ref{Tu2} that the vector fields, $[\gamma^A,\gamma^B]$ and
$[R^{-A},R^{-B}]$ are also $t^x-$related.
\end{proof}

To show linearity of the map, $\gamma:\mathfrak{g}\to \text{VF}(M)$, it is enough to note that
\begin{align}
    R_g^{A+B} = r_{g*}(A+B) = r_{g*}(A)+r_{g*}(B) = R_g^{A}+ R_g^B.
\end{align}
Due to Theorems \ref{Tu1} and \ref{thm:tx-related}, we obtain
\begin{align}
    \gamma^{A+B}(f)\circ t^x(g) &= \gamma^{A}(f)\circ t^x(g) + \gamma^B(f)\circ t^x(g), \\
    \gamma^{A+B}(f)(gx) &= \gamma^{A}(f)(gx) + \gamma^B(f)(gx),
\end{align}
for all group elements $g\in \mathscr{G}$, phase space points $x\in M$, and smooth functions
$f\in C^\infty(M)$; and setting $g=e$, we obtain
\begin{align}
    \gamma^{A+B}(f)(x) &= \gamma^{A}(f)(x) + \gamma^B(f)(x).
\end{align}
This implies that
\begin{align}
 \gamma^{A+B}(f)=\gamma^{A}(f) + \gamma^B(f),
\end{align}
for all smooth functions $f\in C^\infty(M)$, which implies that $\gamma:\mathfrak{g}\to \text{VF}(M)$ is linear.

Similarly, to show that the map, $\gamma$, preserves the Lie brackets, it is enough to show that $[R^{-A},R^{-B}] = R^{[B,A]} = R^{-[A,B]}$, which is true since $R^A = -\iota_*L^A$, and Proposition 16.7 in Ref. \cite{loringtu}, where $\iota:\mathscr{G}\to \mathscr{G}$ is the map $g\mapsto g^{-1}$, and $L^A$ is the left-invariant vector field, $(\ell_g)_*(A)$, on the Lie group, $\mathscr{G}$ \cite{loringtu}. 
Similarly, following the same line of arguments as in the previous paragraph, one can complete
the proof of the claim that $\gamma:\mathfrak{g}\to \text{VF}(M)$ is a Lie algebra homomorphism
by employing Theorem \ref{thm:tx-related}.
\end{appendices}

\end{document}